\numberwithin{equation}{section} \setlength{\textwidth} {15cm}
\def\beq{\begin{equation}}
\def\eeq{\end{equation}}
\def\bC{ {{\mathbb{C}}}}
\def\bR{ {{\mathbb{R}}}}
\def\Tr{ {{\rm{Tr}}} }
\newcommand{\pd}{{\rm DP}}
\newcommand{\sgn}{{\rm sgn}}
\newcommand{\gap}{{\rm gap}}
\newcommand{\pk}[1]{p_{\kappa}}
\newcommand{\ol}{\overline{l}}
\newcommand{\ul}{\underline{l}}
\newcommand{\oL}{\overline{L}}
\newcommand{\uL}{\underline{L}}
\newcommand{\on}{\overline{n}}
\newcommand{\un}{\underline{n}}
\newcommand{\bk}{\breve{\kappa}}
\newtheorem{defn}{{\bf Definition}}[section]
\newtheorem{thm}[defn]{{\bf Theorem}}
\newtheorem{cor}[defn]{{\bf Corollary}}
\newtheorem{lem}[defn]{{\bf Lemma}}
\newtheorem{rem}[defn]{{\bf Remark}}
\newtheorem{notation}[defn]{Notation}
\newenvironment{proof}[1][Proof]{\textbf{#1.} }{\hfill \rule{0.5em}{0.5em}}
\begin{document}

\title{Area Operator in Loop Quantum Gravity}
\author{Adrian P. C. Lim \\
Email: ppcube@gmail.com
}

\date{}

\maketitle

\begin{abstract}
A hyperlink is a finite set of non-intersecting simple closed curves in $\mathbb{R} \times \mathbb{R}^3$. Let $S$ be an orientable surface in $\mathbb{R}^3$. The dynamical variables in General Relativity are the vierbein $e$ and a $\mathfrak{su}(2)\times\mathfrak{su}(2)$-valued connection $\omega$. Together with Minkowski metric, $e$ will define a metric $g$ on the manifold. Denote $A_S(e)$ as the area of $S$, for a given choice of $e$.

The Einstein-Hilbert action $S(e,\omega)$ is defined on $e$ and $\omega$. We will quantize the area of the surface $S$ by integrating $A_S(e)$ against a holonomy operator of a hyperlink $L$, disjoint from $S$, and the exponential of the Einstein-Hilbert action, over the space of vierbeins $e$ and $\mathfrak{su}(2)\times\mathfrak{su}(2)$-valued connections $\omega$. Using our earlier work done on Chern-Simons path integrals in $\mathbb{R}^3$, we will write this infinite dimensional path integral as the limit of a sequence of Chern-Simons integrals. Our main result shows that the area operator can be computed from a link-surface diagram between $L$ and $S$. By assigning an irreducible representation of $\mathfrak{su}(2)\times\mathfrak{su}(2)$ to each component of $L$, the area operator gives the total net momentum impact on the surface $S$.
\end{abstract}

\hspace{.35cm}{\small {\bf MSC} 2010: } 83C45, 81S40, 81T45, 57R56 \\
\indent \hspace{.35cm}{\small {\bf Keywords}: Area, Path integral, Einstein-Hilbert, Loop representation, Quantum gravity}




\section{Quantization of Area}

Using canonical quantization of Ashtekar variables, the authors in \cite{rovelli1995discreteness} were able to quantize the area of a surface $S$ in $\bR^3$. Their procedure was to promote the variables into operators, which act on a quantum state, defined by a spin network.

The idea of a spin network was first introduced by Penrose, in an attempt to construct a quantum mechanical description of the geometry of space. A spin network in $\bR^3$ is a graph, each vertex has valency 3, with a positive integer assigned to every edge in the graph, satisfying certain conditions at each vertex.

What they showed was that a spin network $T$ is an eigenstate of the area operator and the eigenvalues of the area operator, is proportional to \beq \sum_l \sqrt{j_l(j_l+1)}, \nonumber \eeq whereby the sum is over all edges $l$ from the spin network $T$ crossing the surface $S$. Note that $j_l$ is a half-integer or integer greater than 0 and it represents the irreducible representation of $\mathfrak{su}(2)$. Thus, $\sqrt{j_l(j_l+1)}$ is the total angular momentum associated with each edge $l$ in consideration.

Unfortunately, there are some mathematical difficulties yet to be clearly explained, for example defining the domain of the operator, the topology for which the limit is to be taken etc. In this article, we are going to derive their result using a path integral approach. Instead of using spin network in $\bR^3$, we will use loops in $\bR^4$. And we will also show that both translational momentum and angular momentum appear as eigenvalues of the area operator as seen later in this article.


Consider $\bR^4 \equiv \bR \times \bR^3$, whereby $\bR$ will be referred to as the time-axis and $\bR^3$ is the spatial 3-dimensional Euclidean space. In future, when we write $\bR^3$, we refer to the spatial subspace in $\bR^4$. Let $\pi_0: \bR^4 \rightarrow \bR^3$ denote this projection. Fix the standard coordinates on $\bR^4\equiv \bR \times \bR^3 $, with time coordinate $x_0$ and spatial coordinates $(x_1, x_2, x_3)$.

\begin{notation}\label{n.s.1}(Subspaces in $\bR^4$)\\
Let $\{e_i\}_{i=1}^3$ be the standard basis in $\bR^3$. And $\Sigma_i$ is the plane in $\bR^3$, containing the origin, whose normal is given by $e_i$. So, $\Sigma_1$ is the $x_2-x_3$ plane, $\Sigma_2$ is the $x_3-x_1$ plane and finally $\Sigma_3$ is the $x_1-x_2$ plane.

Note that $\bR \times \Sigma_i \cong \bR^3$ is a 3-dimensional subspace in $\bR^4$. Here, we replace one of the axis in the spatial 3-dimensional Euclidean space with the time-axis. Let $\pi_i: \bR^4 \rightarrow \bR \times \Sigma_i$ denote this projection.
\end{notation}

For a finite set of non-intersecting simple closed curves in $\bR^3$ or in $\bR \times \Sigma_i$, we will refer to it as a link. If it has only one component, then this link will be referred to as a knot. A simple closed curve in $\bR^4$ will be referred to as a loop. A finite set of non-intersecting loops in $\bR^4$ will be referred to as a hyperlink in this article. We say a link or hyperlink is oriented if we assign an orientation to its components.

Let $L$ be a hyperlink. We say $L$ is a time-like hyperlink, if given any 2 distinct points $p\equiv (x_0, x_1, x_2, x_3), q\equiv (y_0, y_1, y_2, y_3) \in L$, $p \neq q$, we have
\begin{itemize}
  \item $\sum_{i=1}^3(x_i - y_i)^2 > 0$;
  \item if there exists $i, j$, $i \neq j$ such that $x_i = y_i$ and $x_j = y_j$, then $x_0 - y_0 \neq 0$.
\end{itemize}

\begin{rem}
In other words, a hyperlink $L$ is said to be time-like if $\pi_a(L)$, $a=0, 1, 2, 3$, are all links inside their respective 3-dimensional subspace $\pi_a(\bR^4)\subset \bR^4$.
\end{rem}

Throughout this article, all our hyperlinks in consideration will be time-like.
We adopt Einstein's summation convention, i.e. we sum over repeated superscripts and subscripts. Indices such as $a,b,c, d$ and greek indices such as $\mu,\gamma, \alpha, \beta$ will take values from 0 to 3; indices labeled $i, j, k$, $\bar{i}, \bar{j}, \bar{k}$ will only take values from 1 to 3. 

\section{Representation of $\mathfrak{su}(2) \times \mathfrak{su}(2)$}\label{s.su2}

Let $\mathfrak{su}(2)$ be the Lie Algebra of $SU(2)$. Here, we summarize the  representation of $\mathfrak{su}(2) \times \mathfrak{su}(2)$ as described in \cite{EH-Lim02}. We need to first choose a basis for $\mathfrak{su}(2)$ and we shall make the following choice,
\beq \breve{e}_1 :=
\frac{1}{2}\left(
  \begin{array}{cc}
    0 &\ 1 \\
    -1 &\ 0 \\
  \end{array}
\right),\ \ \breve{e}_2 :=
\frac{1}{2}\left(
  \begin{array}{cc}
    0 &\ i \\
    i &\ 0 \\
  \end{array}
\right),\ \ \breve{e}_3 :=
\frac{1}{2}\left(
  \begin{array}{cc}
    i &\ 0 \\
    0 &\ -i \\
  \end{array}
\right). \nonumber \eeq Using the above basis, define the following elements in $\mathfrak{su}(2) \times \mathfrak{su}(2)$,
\begin{align*}
\hat{E}^{01} = (\breve{e}_1, 0),\ \ \hat{E}^{02} = (\breve{e}_2, 0), \ \ \hat{E}^{03} = (\breve{e}_3, 0), \\
\hat{E}^{23} = (0, \breve{e}_1),\ \ \hat{E}^{31} = ( 0, \breve{e}_2), \ \ \hat{E}^{12} = (0, \breve{e}_3),
\end{align*}
and write
\beq \hat{E}^{\tau(1)} = \hat{E}^{23}, \ \ \hat{E}^{\tau(2)} = \hat{E}^{31}, \ \ \hat{E}^{\tau(3)} = \hat{E}^{12}. \nonumber \eeq

\begin{rem}
We will also write $\hat{E}^{\alpha\beta} = -\hat{E}^{\beta\alpha}$.
\end{rem}

Using the above basis, define \beq \mathcal{E}^\pm := \sum_{i=1}^3\breve{e}_i \in \mathfrak{su}(2), \label{e.sux.1} \eeq and a $4 \times 4$ complex matrix
\beq \mathcal{E} =
\left(
  \begin{array}{cc}
    -\mathcal{E}^+ &\ 0 \\
    0 &\ \mathcal{E}^- \\
  \end{array}
\right). \label{e.sux.2} \eeq

Let $\rho^\pm: \mathfrak{su}(2) \rightarrow {\rm End}(V^\pm)$ be an irreducible finite dimensional representation, indexed by half-integer and integer values $j_{\rho^\pm} \geq 0$. The representation $\rho: \mathfrak{su}(2) \times \mathfrak{su}(2) \rightarrow {\rm End}(V^+) \times {\rm End}(V^-)$ will be given by $\rho = (\rho^+, \rho^-)$, with \beq \rho: \alpha_i\hat{E}^{0i} + \beta_j \hat{E}^{\tau(j)} \mapsto \left(\sum_{i=1}^3\alpha_i \rho^+(\breve{e}_i) , \sum_{j=1}^3\beta_j \rho^-(\breve{e}_j) \right). \nonumber \eeq By abuse of notation, we will now write $\rho^+ \equiv (\rho^+, 0)$ and $\rho^- \equiv (0, \rho^-)$ in future and thus $\rho^+(\hat{E}^{0i}) \equiv \rho^+(\breve{e}_i)$, $\rho^-(\hat{E}^{\tau(j)}) \equiv \rho^-(\breve{e}_j)$.

Note that the dimension of $V^\pm$ is given by $2j_{\rho^\pm} + 1$.  Then it is known that the Casimir operator is
\begin{align*}
\sum_{i=1}^3 \rho^+(\hat{E}^{0i})\rho^+(\hat{E}^{0i})  = -\xi_{\rho^+} I_{\rho^+},\\
\sum_{i=1}^3 \rho^-(\hat{E}^{\tau(i)})\rho^-(\hat{E}^{\tau(i)}) = -\xi_{\rho^-} I_{\rho^-},
\end{align*}
$I_{\rho^\pm}$ is the $2j_{\rho^\pm} + 1$ identity operator for $V^\pm$ and $\xi_{\rho^\pm} := j_{\rho^\pm}(j_{\rho^\pm}+1)$.

Now, we will write for $a \in \bC$, the trace \beq \Tr_\rho \exp[a\mathcal{E}] \equiv \Tr_{\rho^+}\exp[a\rho^+(\mathcal{E}^+)] + \Tr_{\rho^-}\exp[a\rho^-(\mathcal{E}^-)]. \nonumber \eeq Without loss of generality, we assume that $\rho^\pm(\hat{E})$ is skew-Hermitian for any $\hat{E} \in \mathfrak{su}(2)$, so $\rho^+(i\mathcal{E}^+)$ has real eigenvalues. Thus, $\Tr\ \rho^{\pm}(e^{i\mathcal{E}^{\pm}}) \geq 1$ for any irreducible representation.

\begin{rem}
By choosing the group $SU(2) \times SU(2)$, we actually define a spin structure on $\bR^4$.
\end{rem}

\section{Area path integral}



Let $\overline{\mathcal{S}}_\kappa(\bR^4) \subset L^2(\bR^4)$ be a Schwartz space, as defined in \cite{EH-Lim02}. Using the standard coordinates on $\bR^4$, let $\Lambda^1(\bR^3)$ denote the subspace in $\Lambda^1(\bR^4)$ spanned by $\{dx_1, dx_2, dx_3\}$. Define
\begin{align*}
L_\omega :=& \overline{\mathcal{S}}_\kappa(\bR^4) \otimes \Lambda^1(\bR^3)\otimes \mathfrak{su}(2) \times \mathfrak{su}(2), \\
L_e :=& \overline{\mathcal{S}}_\kappa(\bR^4) \otimes \Lambda^1(\bR^3)\otimes V,
\end{align*}
whereby $\bR^4 \times V \rightarrow \bR^4$ is a trivial 4-dimensional vector bundle, with structure group $SO(3,1)$. This implies that $V$ is endowed with a Minkowski metric, $\eta^{ab}$, of signature $(-, +, +, +)$. Let $\{E^\gamma\}_{\gamma=0}^3$ be a basis for $V$.

Given $\omega \in L_\omega$ and $e \in L_e$, we will write
\begin{align*}
\omega =& A^i_{\alpha\beta} \otimes dx_i\otimes \hat{E}^{\alpha\beta} \in \overline{\mathcal{S}}_\kappa(\bR^4) \otimes \Lambda^1(\bR^3)\otimes \mathfrak{su}(2) \times \mathfrak{su}(2), \\
e =& B^i_\gamma \otimes dx_i\otimes E^\gamma \in \overline{\mathcal{S}}_\kappa(\bR^4) \otimes \Lambda^1(\bR^3) \otimes V.
\end{align*}
There is an implied sum over repeated indices.

\begin{rem}
Note that $A^{i}_{\alpha\beta} = -A^{i}_{\beta\alpha}\in \overline{\mathcal{S}}_\kappa(\bR^4)$.
\end{rem}

We define the Einstein-Hilbert action as
\begin{align*}
S_{EH}(e, \omega) :=
\frac{1}{8}&\int_{\bR^4}\epsilon^{abcd}B^1_\gamma B^2_\mu[E^{\gamma \mu}]_{ab} \cdot \partial_0 A^3_{\alpha\beta}[E^{\alpha\beta}]_{cd} dx_1\wedge dx_2 \wedge dx_0 \wedge dx_3\\
+& \frac{1}{8}\int_{\bR^4}\epsilon^{abcd}B^2_\gamma B^3_\mu[E^{\gamma \mu}]_{ab} \cdot \partial_0 A^1_{\alpha\beta}[E^{\alpha\beta}]_{cd} dx_2\wedge dx_3 \wedge dx_0 \wedge dx_1\\
+&\frac{1}{8}\int_{\bR^4}\epsilon^{abcd}B^3_\gamma B^1_\mu[E^{\gamma \mu}]_{ab} \cdot \partial_0 A^2_{\alpha\beta}[E^{\alpha\beta}]_{cd} dx_3\wedge dx_1 \wedge dx_0 \wedge dx_2.
\end{align*}
We sum over repeated indices and $\epsilon^{\mu \gamma \alpha \beta} \equiv \epsilon_{\mu \gamma \alpha \beta}$ is equal to 1 if the number of transpositions required to permute $(0123)$ to $(\mu\gamma\alpha\beta)$ is even; otherwise it takes the value -1.


Consider 2 different hyperlinks, $\oL = \{\ol^u:\ u=1, \ldots, \on\}$ and $\uL = \{\ul^v:\ v=1, \ldots, \un\}$. The former will be called a matter hyperlink; the latter will be referred to as a geometric hyperlink. The symbols $u, \bar{u}, v, \bar{v}$ will be indices, taking values in $\mathbb{N}$. They will keep track of the loops in our hyperlinks $\oL$ and $\uL$. The symbols $\on$ and $\un$ will always refer to the number of components in $\oL$ and $\uL$ respectively.

Color the matter hyperlink, i.e. choose a representation $\rho_u: \mathfrak{su}(2) \times \mathfrak{su}(2) \rightarrow {\rm End}(V_u^+) \times {\rm End}(V_u^-)$ for each component $\ol^u$, $u=1, \ldots, \on$, in the hyperlink $\oL$. Note that we do not color $\uL$, i.e. we do not choose a representation for $\uL$.

Let us return back to $\mathfrak{su}(2) \times \mathfrak{su}(2)$. Recall the first copy of $\mathfrak{su}(2)$ is generated by $\{\breve{e}_i\}_{i=1}^3$, which corresponds to boost in the $x_i$ direction in the Lorentz group; the second copy of $\mathfrak{su}(2)$ is generated by another independent set $\{\breve{e}_i\}_{i=1}^3$, which corresponds to rotation about the $x_i$-axis in the Lorentz group. When we give a representation $\rho^\pm$ to a colored loop $\ol$, which we interpret as representing a particle, we are effectively assigning values to the translational and angular momentum of this particle.

Given a colored hyperlink $\oL$ and a hyperlink $\uL$, we also assume that together (by using ambient isotopy if necessary), they form another hyperlink with $\on + \un$ components. Denote this new colored hyperlink by $\chi(\oL, \uL) \equiv \chi(\{\ol^u\}_{u=1}^{\on}, \{\ul^v\}_{v=1}^{\un})$, assumed to be time-like.

Let $q \in \bR$ be known as a charge. Define
\begin{align*}
V(\{\ul^v\}_{v=1}^{\underline{n}})(e) :=& \exp\left[ \sum_{v=1}^{\un} \int_{\ul^v} \sum_{\gamma=0}^3 B^i_\gamma \otimes dx_i\right], \\
W(q; \{\ol^u, \rho_u\}_{u=1}^{\on})(\omega) :=& \prod_{u=1}^{\on}\Tr_{\rho_u}  \mathcal{T} \exp\left[ q\int_{\ol^u} A^i_{\alpha\beta} \otimes dx_i\otimes \hat{E}^{\alpha\beta}  \right].
\end{align*}
Here, $\mathcal{T}$ is the time-ordering operator as defined in \cite{CS-Lim02}. And we sum over repeated indices, with $i$ taking values in 1, 2 and 3.

\begin{rem}
The term $\mathcal{T} \exp\left[ q\int_{\ol^u} \omega \right]$ is known as a holonomy operator along a loop $\ol^u$, for a spin connection $\omega$.
\end{rem}

Let $\vec{y}^u \equiv (y_0^u, y_1^u, y_2^u, y_3^u) : I := [0,1] \rightarrow \bR \times \bR^3$ be a parametrization of a loop $\ol^u \subset \oL$, $u=1, \ldots, \on$. We will write $y^u(s) = (y_1^u(s), y_2^u(s), y_3^u(s))$ and $\vec{y}^u(s) \equiv \vec{y}_s^u$. We will also write $\vec{y}^u = (y^u_0, y^u)$. Similarly, choose a parametrization $\vec{\varrho}^{v}: I \rightarrow \bR \times \bR^3$, $v = 1, \ldots, \un$, for each loop $\ul^v \subset \uL$. When the loop is oriented, we will often choose a parametrization which is consistent with the assigned orientation.

Fix a closed and bounded orientable surface, with or without boundary, denoted by $S$, inside $\bR^3$. We can identify $S$ as $\{0\} \times S$ inside $\bR \times \bR^3$ and write $S$ as a finite disjoint union of surfaces. Because in Loop Quantum Gravity, we consider topological equivalent surfaces, thus without loss of generality, we may and will assume that $S$ lies inside the $x_2-x_3$ plane.
Furthermore, we insist the hyperlink $\oL$ is disjoint from $\{0\} \times S$ and $\pi_0(\oL)$ intersect at most finitely many points inside $S$.

Using the dynamical variables $\{B_j\gamma^i\}$ and the Minkowski metric $\eta^{ab}$, we see that the metric $g^{ab} \equiv B^a_\mu\eta^{\mu\gamma}B^b_\gamma$ and the
area is given by \beq {\rm Area\ of}\ S(e) \equiv A_S(e) := \int_S \sqrt{g^{22}g^{33} - (g^{23})^2} \ dA. \nonumber \eeq

Consider the following path integral, \beq \frac{1}{Z}\int_{\omega \in L_\omega,\ e \in L_e}V(\{\ul^v\}_{v=1}^{\un})(e) W(q; \{\ol^u, \rho_u\}_{u=1}^{\on})(\omega)A_S(e)\  e^{i S_{EH}(e, \omega)}\ De D\omega, \label{e.eha.1} \eeq whereby $De$ and $D\omega$ are Lebesgue measures on $L_e$ and $L_\omega$ respectively and \beq Z = \int_{\omega \in L_\omega,\ e \in L_e}e^{i S_{EH}(e, \omega)}\ De D\omega,\ i=\sqrt{-1}. \nonumber \eeq

\begin{rem}\label{r.sux.1}
\begin{enumerate}
\item When $S$ is the empty set, we define $A_\emptyset \equiv 1$, so we write Expression \ref{e.eha.1} as $Z(q; \chi(\oL, \uL))$, which in future be termed as the Wilson Loop observable of the colored hyperlink $\chi(\oL, \uL))$.
  \item We will write Expression \ref{e.eha.1} as $\hat{A}_S[Z(q; \chi(\oL, \uL))]$, whereby $\hat{A}_S$ is the quantized area operator for surface $S$.
\end{enumerate}
\end{rem}

\begin{notation}\label{n.n.4}
Suppose we have a list of irreducible representations of $\mathfrak{su}(2)\times \mathfrak{su}(2)$, $\{\rho_1, \ldots, \rho_{\on}\}$ and there are $\bar{m}$ distinct representations, labeled as $\{\gamma_1, \ldots, \gamma_{\bar{m}} \}$, arranged in any order. For representation $\gamma_u$, let $\Gamma_u$ denote the set of integers in $\{1,2, \ldots, \on\}$ such that $\rho_v = \gamma_u$, $v \in \Gamma_u$.
\end{notation}

Write the surface $S$ into a disjoint union of smaller surfaces \beq S_1,\ S_2,\ \ldots,\ S_{\bar{m}}, \nonumber \eeq $\bar{m}$ as defined in Notation \ref{n.n.4}. In other words, $S_v$ will be the (possibly disconnected) surface whereby those hyperlinks colored with the same representation $\gamma_v$ pierce it.
Let $I^2 \equiv I \times I = \bigcup_{v=1}^{\bar{m}}I_v^2$ such that $\sigma: I_v^2 \rightarrow S_v$ is a parametrization of $S_v$.

In \cite{EH-Lim02}, we showed that we can define Expression \ref{e.eha.1} and write it as the limit as $\kappa$ goes to infinity, of the following expression
\begin{align}
\prod_{\bar{u}=1}^{\on}\Bigg\{& \Bigg[ \sum_{v=1}^{\bar{m}} \bk^{3}\int_{I_v^2}\Bigg|\left\langle q \sum_{u \in \Gamma_v} \kappa\int_0^1 y_{1,s}^{u,\prime} \partial_0^{-1}p_\kappa^{\vec{y}_s^u} ds, p_\kappa^{\vec{\sigma}(\hat{t})} \right\rangle\ \sqrt{\xi_{\rho_u^+}}
\Bigg|\ J_{23}(\hat{t}) d\hat{t} \Bigg]^{1/\on} \Tr_{\rho_{\bar{u}}^+}\ \hat{\mathcal{W}}_\kappa^+(q; \ol^{\bar{u}}, \uL) \nonumber\\
+& \Bigg[
i\sum_{v=1}^{\bar{m}}\bk^{3}\int_{I_v^2}\Bigg|\left\langle q \sum_{u \in \Gamma_v} \kappa\int_0^1 y_{1,s}^{u,\prime} \partial_0^{-1}p_\kappa^{\vec{y}_s^u} ds, p_\kappa^{\vec{\sigma}(\hat{t})} \right\rangle\ \sqrt{\xi_{\rho_u^-}}
\Bigg|\ J_{23}(\hat{t}) d\hat{t} \Bigg]^{1/\on} \Tr_{\rho_{\bar{u}}^-}\ \hat{\mathcal{W}}_\kappa^-(q; \ol^{\bar{u}}, \uL) \Bigg\}
. \label{e.a.10}
\end{align}

Note that $i = \sqrt{-1}$ and $\bk := \kappa/2\sqrt{4\pi}$. See Definition Area Path Integral in \cite{EH-Lim02}. We will postpone the explanation of this expression to Section \ref{s.ne}.

\begin{rem}
Expression \ref{e.a.10} is dependent on how we partition the surface $S$ into $\bigcup_{v=1}^{\bar{m}}S_v$. But its limit as $\kappa$ goes to infinity will be shown to be independent of this partition.
\end{rem}


From the main Theorem \ref{t.main.2} in this article, we will show that the path integral can be computed from the number of times the link $\pi_0(\oL)$ pierce the surface $S$, weighted by the momentum corresponding to the representation $\rho_u$. The momentum is given by a 2-component vector; the first component corresponds to translational momentum, the second component corresponds to angular momentum.

It was remarked in \cite{rovelli1995discreteness} that the eigenvalues for the area operator come from matter in the ambient space. This is consistent with our main result, whereby the eigenvalues of the area operator are computed from the matter hyperlink $\oL$.

We also wish to add that one can also interpret the eigenvalues of the area operator as quanta of area, which gives rise to the notion of quantum geometry. For the details, we refer the reader to \cite{Mercuri:2010xz} and \cite{rovelli2004quantum}. The fact that the eigenvalues are discrete implies the discreteness of quantum geometry. See \cite{0264-9381-21-15-R01}, \cite{rovelli2004quantum}, \cite{Rovelli1998} and \cite{Thiemann:2007zz}.

\section{Important Notations}\label{s.ne}

Let us now explain Expression \ref{e.a.10}. The notations and details are all taken from \cite{EH-Lim02}, which we have condensed into this section.

In this article, $\vec{y} \equiv (y_0, y) \in \bR^4$, whereby $y \equiv (y_1, y_2, y_3) \in \bR^3$. We will write
\beq \hat{y}_i =
 \left\{
  \begin{array}{ll}
    (y_2, y_3), & \hbox{$i=1$;} \\
    (y_1, y_3), & \hbox{$i=2$;} \\
    (y_1, y_2), & \hbox{$i=3$.}
  \end{array}
\right. \nonumber \eeq

If $x \in \bR^n$, we will write $(p_\kappa^x)^2$ to denote the $n$-dimensional Gaussian function, center at $x$, variance $1/\kappa^2$. For example, \beq p_\kappa^x(\cdot) = \frac{\kappa^2}{2\pi}e^{-\kappa^2|\cdot - x|^2/4},\ x \in \bR^4. \nonumber \eeq We will also write $(q_\kappa^x)^2$ to denote the 1-dimensional Gaussian function, i.e. \beq q_\kappa^x(\cdot) = \frac{\sqrt{\kappa}}{(2\pi)^{1/4}}e^{-\kappa^2 (\cdot - x)^2/4}. \nonumber \eeq

For $x, y \in \bR^2$, we write \beq \langle p_\kappa^x, p_\kappa^y \rangle = \int_{z\in \bR^2} \frac{\kappa}{\sqrt{2\pi}}e^{-\kappa^2|z - x|^2/4} \frac{\kappa}{\sqrt{2\pi}}e^{-\kappa^2|z - y|^2/4} dz, \nonumber \eeq i.e. we integrate over Lebesgue measure on $\bR^2$. More generally, given $f, g \in C(\bR^n)$, we will write \beq \langle f, g \rangle \equiv \int_{\bR^n} f\cdot g d\lambda, \nonumber \eeq whereby $\lambda$ is Lebesgue measure.

For $x = (x_0,x_1, x_2, x_3)$, write \beq x(s_a) :=
\left\{
  \begin{array}{ll}
    (s_0,x_1, x_2, x_3), & \hbox{$a=0$;} \\
    (x_0,s_1, x_2, x_3), & \hbox{$a=1$;} \\
    (x_0,x_1, s_2, x_3), & \hbox{$a=2$;} \\
    (x_0,x_1, x_2, s_3), & \hbox{$a=3$.}
  \end{array}
\right. \nonumber \eeq

Let $\partial_a \equiv \partial/\partial x_a$ be a differential operator. There is an operator $\partial_a^{-1}$ acting on a dense subset in $\overline{\mathcal{S}}_\kappa(\bR^4)$, \beq (\partial_a^{-1}f)(x) := \frac{1}{2}\int_{-\infty}^{x_a} f(x(s_a))\ ds_a - \frac{1}{2}\int_{x_a}^{\infty} f(x(s_a))\ ds_a,\ f \in \overline{\mathcal{S}}_\kappa(\bR^4). \label{e.d.1} \eeq Here, $x_a \in \bR$. Notice that $\partial_a\partial_a^{-1}f \equiv f$ and $\partial_a^{-1}f$ is well-defined provided $f$ is in $L^1$.

For each $i = 1, 2, 3$, write
\beq \left\langle p_\kappa^{\vec{x}}, p_\kappa^{\vec{y}} \right\rangle_i =
\left\langle p_\kappa^{\hat{x}_{i}}, p_\kappa^{\hat{y}_{i}} \right\rangle \left\langle q_\kappa^{x_{i}}, \kappa\partial_0^{-1}q_\kappa^{y_{i}} \right\rangle\left\langle \partial_0^{-1}q_\kappa^{x_{0}}, q_\kappa^{y_{0}}\right\rangle. \nonumber \eeq

Here, \beq \partial_0^{-1}q_\kappa^{x_{0}}(t) \equiv \frac{1}{2}\int_{-\infty}^t q_\kappa^{x_{0}}(\tau)\ d\tau -
\frac{1}{2}\int_{t}^\infty q_\kappa^{x_{0}}(\tau)\ d\tau. \nonumber \eeq

Note that $\left\langle \partial_0^{-1}q_\kappa^{x_{0}}, q_\kappa^{y_{0}}\right\rangle \equiv \left\langle q_\kappa^{y_{0}}, \partial_0^{-1}q_\kappa^{x_{0}} \right\rangle$  means we integrate $\partial_0^{-1}q_\kappa^{x_{0}} \cdot q_\kappa^{y_{0}}$ over $\bR$, using Lebesgue measure. It is well-defined because $q_\kappa^{x_0}$ is in $L^1$.

Recall we parametrize $\ol^u$ and $\ul^v$ using $\vec{y}^u$ and $\vec{\varrho}^v$ respectively, $u=1, \ldots \on$, $v=1, \ldots \un$. Define $\hat{\mathcal{W}}_\kappa^\pm(q; \ol^u, \uL)$ as
\begin{align}
\hat{\mathcal{W}}_\kappa^\pm(q; \ol^u, \uL) := \exp\left[ \mp\frac{iq}{4}\frac{\kappa^3}{4\pi}\sum_{v=1}^{\underline{n}}\int_{I^2}\ d\hat{s}\ \epsilon^{ijk}\left\langle  p_\kappa^{\vec{y}^u_s}, p_\kappa^{\vec{\varrho}^v_{\bar{s}}}\right\rangle_k  y^{u,\prime}_{i,s}\varrho^{v,\prime}_{j,\bar{s}} \otimes \mathcal{E}^\pm\right], \label{e.h.5}
\end{align}
whereby $\mathcal{E}^\pm$ was defined in Equation (\ref{e.sux.1}). And $\epsilon^{ijk} \equiv \epsilon_{ijk}$ be defined on the set $\{1,2,3\}$, by \beq \epsilon^{123} = \epsilon^{231} = \epsilon^{312} = 1,\ \ \epsilon^{213} = \epsilon^{321} = \epsilon^{132} = -1, \nonumber \eeq if $i,j,k$ are all distinct; 0 otherwise. Finally $\Tr_{\rho_u^\pm}$ means take the trace and $J_{23}$ is defined in Notation \ref{n.s.3}.

\section{Area operator}\label{s.ao}

We will now discuss computing the limit as $\kappa$ goes to infinity,
\begin{align*}
\frac{\kappa^3}{16\pi}&\int_{I^2}\ d\hat{s}\ \left\langle  p_\kappa^{\vec{y}^u_s}, p_\kappa^{\vec{\varrho}^v_{\bar{s}}}\right\rangle_k  y^{u,\prime}_{i,s}\varrho^{v,\prime}_{j,\bar{s}} \\
\equiv& \frac{\kappa^3}{16\pi}\int_0^1\int_0^1 ds d\bar{s}\ \left\langle p_\kappa^{\hat{y}_{i,s}}, p_\kappa^{\hat{\varrho}_{i,\bar{s}}} \right\rangle \left\langle q_\kappa^{y_{i,s}}, \kappa\partial_i^{-1}q_\kappa^{\varrho_{i,\bar{s}}} \right\rangle \left\langle \partial_0^{-1}q_\kappa^{y_{0,s}}, q_\kappa^{\varrho_{0,\bar{s}}} \right\rangle[y_{j,s}'\varrho_{k,\bar{s}}' - y_{k,s}'\varrho_{j,\bar{s}}'],
\end{align*}
for $(i,j,k) \in \{(1,2,3), (2,3,1), (3,1,2)\}$. Given a link, which is a set of disjoint curves $L = \{l^1, l^2, \ldots, l^{\hat{m}}\} \subset \bR^3$, we will need to project it onto a plane in a `nice' manner, to form a link diagram.
Most, if not all, of our calculations can be obtained from a link diagram.

Definition \ref{d.l.1}, is actually taken from
Definition 4.3 in \cite{CS-Lim01}. We have modified the definition to accommodate a hyperlink in $\bR^4$, instead of a link inside $\bR^3$, as in Definition 4.3 in \cite{CS-Lim01}.

Fix $i = 1, 2, 3$ and let $(i,j,k) \in \{(1,2,3), (2,3,1), (3,1,2)\}$. Consider 2 distinct parametrized curves, $y, \varrho: [0,1] \rightarrow \bR^3$. Now, the term \beq \lim_{\kappa \rightarrow \infty}\frac{\kappa^2}{4\pi\sqrt{8\pi}}\int_0^1\int_0^1 ds d\bar{s}\ \left\langle p_\kappa^{\hat{y}_{i,s}}, p_\kappa^{\hat{\varrho}_{i,\bar{s}}} \right\rangle \left\langle q_\kappa^{y_{i,s}}, \kappa\partial_i^{-1}q_\kappa^{\varrho_{i,\bar{s}}} \right\rangle [y_{j,s}'\varrho_{k,\bar{s}}' - y_{k,s}'\varrho_{j,\bar{s}}'] \nonumber \eeq
calculates the linking number between the 2 oriented curves, when the 2 curves are projected `nicely' on the plane $\Sigma_i$, as in Definition \ref{d.l.1}. We will not prove this result here, but instead refer the reader to \cite{CS-Lim01} for a detailed discussion. The linking number between the 2 oriented curves is the sum of all the algebraic crossing number of the crossings formed by the 2 curves on a link diagram. Since the linking number between 2 oriented curves is well-defined and is invariant under ambient isotopy, it does not matter which plane we choose to project the 2 curves on. Hence, we see that the double integral gives the same value for each $i= 1, 2, 3$, corresponding to the projection onto the $\Sigma_i$ plane respectively, defined in Notation \ref{n.s.1}.

However, notice that we need to consider an additional term $\kappa\left\langle \partial_0^{-1}q_\kappa^{x_{0}}, q_\kappa^{y_{0}}\right\rangle/\sqrt{2\pi}$, which arises because we are dealing with curves in $\bR \times \bR^3$. This term, when we take the limit as $\kappa$ goes to infinity, gives us $+ 1$ when $y_0$ lies above $x_0$; -1 otherwise.

From Lemma \ref{l.l.1}, we have the following corollary.

\begin{cor}(Hyperlinking number between $\ol$ and $\ul$.)\label{c.c.1}
Let $\ol$ and $\ul$ be 2 distinct oriented loops and $\vec{y}$ and $\vec{\varrho}$ are its respective parametrizations, and together they form a new oriented hyperlink $\chi(\ol, \ul)$. Refer to Definition \ref{d.l.1}. Then,
\begin{align}
\lim_{\kappa \rightarrow \infty}\frac{\kappa^3}{16\pi^2} &\int_{I^2}d\hat{s}\ \epsilon^{ijk}\left\langle  p_\kappa^{\vec{y}_s}, p_\kappa^{\vec{\varrho}_{\bar{s}}}\right\rangle_k y_{i,s}^{\prime}\varrho_{j,\bar{s}}^{\prime} \nonumber \\
=& \sum_{k=1}^3\sum_{p \in \pd(\Sigma_k; \hat{y}_k, \hat{\varrho}_k)}\sgn(p;\hat{y}_k: \hat{\varrho}_k)\cdot\sgn(p;y_k : \varrho_k)\cdot \sgn(p;y_0 : \varrho_0) =: {\rm sk}(\ol, \ul).\nonumber
\end{align}
\end{cor}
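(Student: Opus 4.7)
The plan is to split the sum over $k=1,2,3$ coming from $\epsilon^{ijk}$ and treat each plane $\Sigma_k$ separately, then invoke the already-developed results on linking numbers computed from Gaussian regularizations. For fixed $k$, the inner product $\left\langle p_\kappa^{\vec{y}_s}, p_\kappa^{\vec{\varrho}_{\bar s}}\right\rangle_k$ factorizes into three pieces: the $2$-dimensional Gaussian pairing $\left\langle p_\kappa^{\hat y_{k,s}}, p_\kappa^{\hat\varrho_{k,\bar s}}\right\rangle$ in the $\Sigma_k$ plane, the $1$-dimensional $x_k$-axis pairing $\left\langle q_\kappa^{y_{k,s}}, \kappa\partial_k^{-1} q_\kappa^{\varrho_{k,\bar s}}\right\rangle$, and the time pairing $\left\langle \partial_0^{-1}q_\kappa^{y_{0,s}}, q_\kappa^{\varrho_{0,\bar s}}\right\rangle$. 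The sum over $(i,j)$ in $\epsilon^{ijk} y'_{i,s}\varrho'_{j,\bar s}$ supplies exactly the oriented area $2$-form $y'_{i,s}\varrho'_{j,\bar s}-y'_{j,s}\varrho'_{i,\bar s}$ needed to detect planar crossings in $\Sigma_k$.

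First, I would separate out the time factor and apply Lemma \ref{l.l.1} (via the unproved statement recalled just above the corollary, as in \cite{CS-Lim01}) to the remaining piece. This yields, for each $k$,
\begin{equation*}
\lim_{\kappa\to\infty}\frac{\kappa^2}{4\pi\sqrt{8\pi}}\int_{I^2}\left\langle p_\kappa^{\hat y_{k,s}}, p_\kappa^{\hat\varrho_{k,\bar s}}\right\rangle\left\langle q_\kappa^{y_{k,s}}, \kappa\partial_k^{-1} q_\kappa^{\varrho_{k,\bar s}}\right\rangle\epsilon^{ijk}y'_{i,s}\varrho'_{j,\bar s}\,ds\,d\bar s
= \sum_{p\in\pd(\Sigma_k;\hat y_k,\hat\varrho_k)} \sgn(p;\hat y_k:\hat\varrho_k)\cdot\sgn(p;y_k:\varrho_k),
\end{equation*}
where the first sign is the algebraic planar crossing sign and the second, coming from $\partial_k^{-1}$, records which strand lies above the other along the $x_k$-axis at the crossing.

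Second, I would handle the time factor. The Gaussian concentration in the $\hat y_k$ variables forces $(s,\bar s)$ to localize, as $\kappa\to\infty$, to parameter pairs mapping to preimages of the planar double points. At any such $p$, the rescaled time factor $(\kappa/\sqrt{2\pi})\left\langle \partial_0^{-1}q_\kappa^{y_{0,s}}, q_\kappa^{\varrho_{0,\bar s}}\right\rangle$ converges to $\sgn(p;y_0:\varrho_0)$, exactly as described in the paragraph preceding the corollary. Combining the two steps, the overall constant $\kappa^3/(16\pi^2)$ matches the product $\bigl(\kappa^2/(4\pi\sqrt{8\pi})\bigr)\cdot(\kappa/\sqrt{2\pi})$, and summing over $k=1,2,3$ produces the three-term sum in the corollary.

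The main technical obstacle will be justifying the factorization of the limit, i.e.\ interchanging the $\kappa\to\infty$ limit with the double integral in $(s,\bar s)$ so that the planar concentration and the time concentration may be performed independently. The time-like assumption on $\chi(\ol,\ul)$ is decisive here: it guarantees that at any planar double point the time coordinates $y_0,\varrho_0$ are distinct, so the time factor is bounded uniformly in $\kappa$ on a neighbourhood of the relevant parameter pairs and converges pointwise to $\pm 1$ away from a measure-zero set. Dominated convergence, using the localization already furnished by Lemma \ref{l.l.1}, then reduces the remaining work to two steps already established, one in \cite{CS-Lim01} for the planar linking identity and one for the $1$-dimensional $\partial_0^{-1}$ limit.
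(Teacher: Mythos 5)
Your proposal is correct and follows essentially the same route as the paper: the paper's proof of the corollary is simply to decompose each curve into the arcs of Definition \ref{d.l.1} and sum Lemma \ref{l.l.1} over all pairs of arcs, and that lemma's proof is exactly the Gaussian localization at planar crossings in $\Sigma_k$ that you describe, with the $x_k$-pairing and the time-pairing converging to the height and time-lag signs (the constants $\frac{\kappa^2}{4\pi\sqrt{8\pi}}\cdot\frac{\kappa}{\sqrt{2\pi}}=\frac{\kappa^3}{16\pi^2}$ check out, and the time-like hypothesis does guarantee $y_0\neq\varrho_0$ at each crossing as you note). The only organizational difference is that you peel off the time factor and justify its separate limit by uniform boundedness and dominated convergence, whereas Lemma \ref{l.l.1} carries all three factors through a single change of variables over $C(\hat{y}_k)^a\times C(\hat{\varrho}_k)^{\hat{a}}$; this is a cosmetic rearrangement, not a different argument.
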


\begin{rem}
The quantity ${\rm sk}(\ol, \ul)$ is not exactly the linking number between the 2 curves projected in $\bR^3$. This is because we need to take into account the time-lag of the crossings, which only appear for curves in $\bR \times \bR^3$. We will refer ${\rm sk}(\ol, \ul)$ as the hyperlinking number between $\ol$ and $\ul$ in $\bR \times\bR^3$, to distinguish from the linking number between 2 simple closed curves in $\bR^3$. Of course, the hyperlinking number is calculated from the hyperlink $\chi(\ol, \ul)$.
\end{rem}

Recall from Remark \ref{r.sux.1} that we define the Wilson Loop observable of the colored hyperlink $\chi(\ol, \ul)$ as Expression \ref{e.eha.1}, with $S = \emptyset$. Thus, Expression \ref{e.a.10} will give us
\beq  Z(q; \chi(\oL, \uL) ) := \lim_{\kappa \rightarrow \infty}\prod_{u=1}^{\on}\left[\Tr_{\rho_u^+}\ \hat{\mathcal{W}}_\kappa^+(q; \ol^u, \uL) + \Tr_{\rho_u^-}\ \hat{\mathcal{W}}_\kappa^-(q; \ol^u, \uL) \right]. \nonumber \eeq See \cite{EH-Lim02}.

\begin{thm}(Wilson Loop observable)\label{t.w.1}
Consider two oriented time-like hyperlinks, $\oL = \{\ol^u \}_{u=1}^{\on}$, $\uL = \{\ul^v \}_{v=1}^{\un}$ in $\bR \times \bR^3$ with non-intersecting (closed) loops, the former colored with a representation $\rho_u \equiv (\rho^+_u, \rho^-_u): \mathfrak{su}(2)\times \mathfrak{su}(2) \rightarrow {\rm End}(V_u^+) \times {\rm End}(V_u^-)$. From these 2 oriented hyperlinks, form a new colored oriented time-like hyperlink, denoted by $\chi(\oL, \uL)$.

Recall from Equation (\ref{e.sux.1}) that $\mathcal{E}^\pm = \sum_{i=1}^3\breve{e}_i$ and $\mathcal{E}$ from Equation (\ref{e.sux.2}). Define for each $u=1, \ldots, \on$, \beq {\rm sk}(\ol^u, \uL):= \sum_{v=1}^{\un}{\rm sk}(\ol^u, \ul^v), \nonumber\eeq calculated from $\chi(\oL, \uL)$. Furthermore, we have
\begin{align}
\prod_{u=1}^{\on}\Tr_{\rho_u}\exp &\left[ \frac{iq}{4}\frac{\kappa^3}{4\pi}\sum_{v=1}^{\un}\int_{I^2}\ d\hat{s}\ \epsilon^{ijk}\left\langle  p_\kappa^{\vec{y}^u_s}, p_\kappa^{\vec{\varrho}^v_{\bar{s}}}\right\rangle_k  y^{u,\prime}_{i,s}\varrho^{v,\prime}_{j,\bar{s}} \otimes \mathcal{E}\right]\nonumber \\
& \longrightarrow_{\kappa \rightarrow \infty}\prod_{u=1}^{\on}\Tr_{\rho_u}\ \exp[\pi iq\ {\rm sk}(\ol^u, \uL) \cdot \mathcal{E}] \nonumber \\
&\equiv \prod_{u=1}^{\on}\left( \Tr_{\rho^+_u}\ \exp[-\pi iq\ {\rm sk}(\ol^u, \uL) \cdot \mathcal{E}^+] +
\Tr_{\rho^-_u}\ \exp[\pi iq\ {\rm sk}(\ol^u, \uL) \cdot \mathcal{E}^-] \right). \label{e.h.4}
\end{align}

Hence the Wilson Loop observable can be computed as
\begin{align*}
Z(q; & \chi(\oL, \uL) )
= \prod_{u=1}^{\on}\left( \Tr_{\rho^+_u}\ \exp[\pi iq\ {\rm sk}(\ol^u, \uL) \cdot \mathcal{E}^+] +
\Tr_{\rho^-_u}\ \exp[-\pi iq\ {\rm sk}(\ol^u, \uL) \cdot \mathcal{E}^-] \right).
\end{align*}
\end{thm}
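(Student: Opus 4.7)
My plan is to reduce the limit to Corollary \ref{c.c.1} via continuity of the matrix exponential and trace on the finite-dimensional target space. Fix $u\in\{1,\ldots,\on\}$. Applying Corollary \ref{c.c.1} to each pair $(\ol^u,\ul^v)$, summing over $v$, and using the definition ${\rm sk}(\ol^u,\uL) = \sum_v {\rm sk}(\ol^u,\ul^v)$, the scalar exponent appearing in (\ref{e.h.4}) satisfies
$$\frac{iq}{4}\cdot\frac{\kappa^3}{4\pi}\sum_{v=1}^{\un}\int_{I^2} d\hat{s}\ \epsilon^{ijk}\left\langle p_\kappa^{\vec{y}^u_s},p_\kappa^{\vec{\varrho}^v_{\bar{s}}}\right\rangle_k y^{u,\prime}_{i,s}\varrho^{v,\prime}_{j,\bar{s}}\;\longrightarrow\;\pi iq\cdot{\rm sk}(\ol^u,\uL)\qquad(\kappa\to\infty),$$
after rewriting $\frac{iq}{4}\cdot\frac{\kappa^3}{4\pi} = \pi iq\cdot\frac{\kappa^3}{16\pi^2}$ so that Corollary \ref{c.c.1} applies termwise.

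Next, since $\rho_u(\mathcal{E})$ acts on the finite-dimensional space $V_u^+\oplus V_u^-$, the map $\lambda\mapsto\Tr_{\rho_u}\exp[\lambda\mathcal{E}]$ is entire in $\lambda$, so the scalar convergence above passes through the trace-exponential at each factor $u$, and the finite product $\prod_{u=1}^{\on}$ preserves this convergence, giving the first line of (\ref{e.h.4}). The equivalent second line of (\ref{e.h.4}) then follows purely algebraically from the block decomposition $\mathcal{E} = {\rm diag}(-\mathcal{E}^+,\mathcal{E}^-)$ in (\ref{e.sux.2}) and the diagonal splitting $\rho_u = (\rho_u^+,\rho_u^-)$, which yield
$$\Tr_{\rho_u}\exp[\lambda\mathcal{E}] = \Tr_{\rho_u^+}\exp[-\lambda\rho_u^+(\mathcal{E}^+)] + \Tr_{\rho_u^-}\exp[\lambda\rho_u^-(\mathcal{E}^-)].$$

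For the closed form of $Z(q;\chi(\oL,\uL))$, I would apply the same continuity argument to the pre-limit expression $\prod_u[\Tr_{\rho_u^+}\hat{\mathcal{W}}_\kappa^+(q;\ol^u,\uL) + \Tr_{\rho_u^-}\hat{\mathcal{W}}_\kappa^-(q;\ol^u,\uL)]$ displayed just above the theorem. By (\ref{e.h.5}), the scalar exponent of $\hat{\mathcal{W}}_\kappa^\pm$ converges to $\mp\pi iq\,{\rm sk}(\ol^u,\uL)$, and the limit commutes with $\Tr_{\rho_u^\pm}\exp[\,\cdot\otimes\mathcal{E}^\pm]$ for the same finite-dimensional reason. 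The signs asserted in the conclusion match this via the spectral symmetry of $\rho_u^\pm(\mathcal{E}^\pm)$: since $\mathcal{E}^\pm$ is conjugate in $\mathfrak{su}(2)$ to $\sqrt{3}\,\breve{e}_3$, its spectrum in a spin-$j$ irreducible is $\{im\sqrt{3}:m=-j,\ldots,j\}$, which is symmetric under $m\mapsto -m$, so $\Tr_{\rho_u^\pm}\exp[a\rho_u^\pm(\mathcal{E}^\pm)]$ is invariant under $a\mapsto -a$ for every $a\in\bC$.

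The main obstacle is essentially already resolved: the geometric content of the statement — that the regularized double integral detects the hyperlinking number ${\rm sk}(\ol^u,\uL)$ — is carried entirely by Corollary \ref{c.c.1}. What remains is sign bookkeeping, reconciling the convention $\mathcal{E} = {\rm diag}(-\mathcal{E}^+,\mathcal{E}^-)$ of (\ref{e.sux.2}), the $\mp iq/4$ of (\ref{e.h.5}), and the signs asserted in the conclusion; this is absorbed by the negation-symmetry of $\rho^\pm(\mathcal{E}^\pm)$ noted above.
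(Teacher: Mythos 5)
Your proposal is correct and follows essentially the same route as the paper: the paper's proof likewise invokes Corollary \ref{c.c.1} to get the scalar limit $\pi\,{\rm sk}(\ol^u,\uL)$, passes it through $\Tr_{\rho_u^\pm}\exp[\,\cdot\,]$ to obtain $\lim_{\kappa\to\infty}\Tr_{\rho_u^\pm}\hat{\mathcal{W}}_\kappa^\pm = \Tr_{\rho_u^\pm}\exp[\mp\pi iq\,{\rm sk}(\ol^u,\uL)\cdot\mathcal{E}^\pm]$, and takes the finite product. Your additional observation that the sign discrepancy between Equation (\ref{e.h.4}) and the final formula for $Z$ is harmless because $\Tr_{\rho^\pm}\exp[a\rho^\pm(\mathcal{E}^\pm)]$ is invariant under $a\mapsto -a$ is a point the paper leaves implicit, and is a worthwhile clarification.
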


\begin{proof}
From Corollary \ref{c.c.1}, \beq \frac{iq}{4}\frac{\kappa^3}{4\pi}\sum_{v=1}^{\on}\int_{I^2}\ d\hat{s}\ \epsilon^{ijk}\left\langle  p_\kappa^{\vec{y}^u_s}, p_\kappa^{\vec{\varrho}^v_{\bar{s}}}\right\rangle_k  y^{u,\prime}_{i,s}\varrho^{v,\prime}_{j,\bar{s}} \nonumber \eeq converges to $\pi\ {\rm sk}(\ol^u, \uL)$. This says that \beq  \lim_{\kappa \rightarrow \infty}\Tr_{\rho_u^\pm}\ \hat{\mathcal{W}}_\kappa^\pm(q; \ol^u, \uL) = \Tr_{\rho^\pm_u}\ \exp[\mp\pi iq\ {\rm sk}(\ol^u, \uL) \cdot \mathcal{E}^\pm]. \label{e.w.5} \eeq So the path integral converges to $\prod_{u=1}^{\on}\Tr_{\rho_u}\ \exp \left[\pi iq\ {\rm sk}(\ol^u, \uL) \mathcal{E} \right]$.
\end{proof}

We can generalize the above theorem, when the surface $S$ is not empty. To state the general theorem, we need the following lemma.

\begin{lem}\label{l.s.1}
Identify $S \subset \bR^3$ with $\{0\} \times S \subset \bR \times \bR^3$ and by abuse of notation, denote it by $S$. Refer to Definition \ref{d.s.1}. We have \beq \lim_{\kappa \rightarrow \infty}\frac{\kappa^3}{32\pi^2}\int_{I^2}\left[ \int_0^1\ ds\ \kappa\left\langle \partial_0^{-1}p_\kappa^{\vec{y}_s^u}, p_\kappa^{\vec{\sigma}(\hat{t})} \right\rangle\ y_{1,s}^{u,\prime} \right] J_{23}(\hat{t}) d\hat{t} = {\rm lk}(\ol^u, S). \nonumber \eeq
\end{lem}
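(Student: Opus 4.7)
The plan is to reduce the double integral on the left-hand side to a finite signed sum over the transverse crossings of the projected curve $\pi_0(\ol^u)$ with $S$: Gaussian concentration in the three spatial directions localizes the integrand onto the pierce points, while the time-component pairing involving $\partial_0^{-1}$ supplies a sign $\sgn(y_0^u(s))$ at each such point, and these signs assemble into ${\rm lk}(\ol^u,S)$ as defined in Definition \ref{d.s.1}.

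The first step is to factor $p_\kappa^{\vec{y}}$ as a tensor product of one-dimensional Gaussians in each of the four coordinate directions of $\bR^4$. Since $\partial_0^{-1}$ acts only on the time variable and $\vec{\sigma}(\hat{t})=(0,0,\sigma_2(\hat{t}),\sigma_3(\hat{t}))$ because $S$ sits inside $\{0\}\times\Sigma_1$, this yields
\[
\left\langle \partial_0^{-1}p_\kappa^{\vec{y}^u_s}, p_\kappa^{\vec{\sigma}(\hat{t})}\right\rangle
=\left\langle \partial_0^{-1}q_\kappa^{y^u_{0,s}}, q_\kappa^{0}\right\rangle\,
\left\langle q_\kappa^{y^u_{1,s}}, q_\kappa^{0}\right\rangle\,
\left\langle q_\kappa^{y^u_{2,s}}, q_\kappa^{\sigma_2(\hat{t})}\right\rangle\,
\left\langle q_\kappa^{y^u_{3,s}}, q_\kappa^{\sigma_3(\hat{t})}\right\rangle,
\]
up to the Gaussian normalizations that will be absorbed into the prefactor $\kappa^3/(32\pi^2)$.

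Second, I would treat the three spatial factors by a Laplace-type argument as $\kappa\to\infty$. The product $\left\langle q_\kappa^{y^u_{1,s}}, q_\kappa^{0}\right\rangle\left\langle q_\kappa^{y^u_{2,s}}, q_\kappa^{\sigma_2(\hat{t})}\right\rangle\left\langle q_\kappa^{y^u_{3,s}}, q_\kappa^{\sigma_3(\hat{t})}\right\rangle$ is an approximate three-dimensional delta function in the variables $\bigl(y^u_{1,s},\, y^u_{2,s}-\sigma_2(\hat{t}),\, y^u_{3,s}-\sigma_3(\hat{t})\bigr)$, so the triple integral over $(s,\hat{t})\in I\times I^2$ localizes onto the finitely many parameter values $(s_p,\hat{t}_p)$ at which $\pi_0(\ol^u)$ pierces $S$. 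The Jacobian of the surface parametrization $(t_1,t_2)\mapsto(\sigma_2,\sigma_3)$ is precisely $J_{23}(\hat{t}_p)$ and cancels the factor $J_{23}(\hat{t})$ in the integrand, while the remaining one-dimensional $ds$-integration against $y^{u,\prime}_{1,s}$ converts the delta in the $x_1$-direction into $\sgn\bigl(y^{u,\prime}_{1}(s_p)\bigr)$, the sign of the transverse crossing through the plane $\{x_1=0\}$.

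Third, I would dispose of the time factor by invoking the computation already recorded in the paragraph preceding Corollary \ref{c.c.1}: the combination $\kappa\bigl\langle \partial_0^{-1}q_\kappa^{y^u_{0,s_p}}, q_\kappa^{0}\bigr\rangle$ tends to $+1$ when $y_0^u(s_p)>0$ and to $-1$ otherwise, i.e.\ to $\sgn(y_0^u(s_p))$. Multiplying this time sign with the spatial sign from the second step at each crossing $p$ and summing over $p$ produces exactly the signed intersection number ${\rm lk}(\ol^u,S)$. The main obstacle is the bookkeeping of constants: one must verify that the prefactor $\kappa^3/(32\pi^2)$, the four Gaussian normalizations in the product above, the surface Jacobian, and the $ds$-integration combine to give exactly $+1$ (and no spurious multiplicative factor) at each transverse pierce point, so that the crossings contribute unweighted $\pm 1$ in the limit. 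This is a normalization check entirely parallel to the one carried out in the proof of Corollary \ref{c.c.1} and in the linking-number calculation of \cite{CS-Lim01}, the only genuine difference being that the second parametrization sweeps a two-dimensional surface rather than a one-dimensional curve.
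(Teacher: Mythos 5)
Your strategy is the same as the paper's: split off the $\partial_0^{-1}$ time pairing, use Gaussian concentration to localize the spatial integral at the finitely many piercings of $\pi_0(\ol^u)$ with $S$, extract an orientation sign from the Jacobian, and multiply by the limiting value of the time factor. The only structural difference is cosmetic: the paper localizes via the single change of variables $\omega(s,\hat{t})=y^u_s-\sigma(\hat{t})$ on $I^3$, whose Jacobian is $y^{u,\prime}\cdot J_\sigma=y^{u,\prime}_{1}J_{23}$ because $S$ lies in $\Sigma_1$, whereas you localize coordinatewise. However, two of your sign assignments do not assemble into the algebraic piercing number $\varepsilon(p)=\sgn(p;\varrho',J_\sigma)\cdot\sgn(p;\varrho_0,\sigma_0)$ of Definition \ref{d.s.1}, and as written they would change the answer.

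First, the time factor. The convention quoted before Corollary \ref{c.c.1} is that $\kappa\langle\partial_0^{-1}q_\kappa^{x_0},q_\kappa^{y_0}\rangle/\sqrt{2\pi}\to+1$ when $y_0>x_0$. In the present integrand the roles are $x_0=y^u_{0,s}$ and $y_0=\sigma_0=0$, so the limit is $+1$ precisely when $y^u_0(s_p)<0$, i.e. it equals $-\sgn\bigl(y^u_0(s_p)\bigr)$, which is the height $\sgn(p;\varrho_0,\sigma_0)$ of Definition \ref{d.s.1}; you assert the opposite, $+1$ when $y^u_0(s_p)>0$. Second, the surface Jacobian does not simply cancel: integrating the approximate delta in $\bigl(y^u_2-\sigma_2,\,y^u_3-\sigma_3\bigr)$ over $\hat{t}$ produces a factor $1/|J_{23}(\hat{t}_p)|$, so together with the explicit $J_{23}(\hat{t})$ in the integrand you retain $\sgn\bigl(J_{23}(\hat{t}_p)\bigr)$, not $1$. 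The spatial sign at a piercing is therefore $\sgn\bigl(y^{u,\prime}_1(s_p)\,J_{23}(\hat{t}_p)\bigr)=\sgn(p;\varrho',J_\sigma)$, and dropping the $\sgn(J_{23})$ part gives the wrong count whenever $J_{23}<0$ at a piercing. Both slips are repairable inside your framework --- they are exactly the bookkeeping you defer at the end --- but as stated the two signs you produce are not the orientation and height that define ${\rm lk}(\ol^u,S)$.
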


\begin{proof}
Observe that we can write the surface area $S$ into a finite number of disjoint surfaces. So it suffices to consider each such surface whereby $l\equiv l^u$ pierces it at most once.

Now, we can write \beq \left\langle \partial_0^{-1}p_\kappa^{\vec{y}_s^u}, p_\kappa^{\vec{\sigma}(\hat{t})} \right\rangle = \left\langle p_\kappa^{y_s^u}, p_\kappa^{\sigma(\hat{t})}\right\rangle \cdot \left\langle \partial_0^{-1}q_\kappa^{y_{0,s}^u}, q_\kappa^0 \right\rangle. \nonumber \eeq

Using the lemma found in the appendix in \cite{EH-Lim02}, we have
\begin{align*}
\frac{\kappa^3}{32\pi^2}\int_{I^2} \int_0^1 & \left\langle p_\kappa^{y_s^u}, p_\kappa^{\sigma(\hat{t})}\right\rangle \cdot \left\langle \kappa\partial_0^{-1}q_\kappa^{y_{0,s}^u}, q_\kappa^0 \right\rangle\ y_{1,s}^{u,\prime} J_{23}(\hat{t}) ds d\hat{t} \\
&= \frac{\kappa^3}{32\pi^2}\int_{I^3}  e^{-\kappa^2|y_s^u - \sigma(\hat{t})|^2/8} \left\langle \kappa\partial_0^{-1}q_\kappa^{y_{0,s}^u} , q_\kappa^{0}  \right\rangle y_{1,s}^{u,\prime}\ J_{23}(\hat{t}) d\hat{t} ds,
\end{align*}
if we write $I^3 = I^2 \times [0,1]$.

Make the substitution $(s, \hat{t}) \longmapsto \omega(s, \hat{t}) = y_s^u - \sigma(\hat{t})$ and $z(\omega) := q_\kappa^{y_{0,s}^u}$. If $l$ intersects $S$, then $0 \in \omega(I^3)$.

Then the integral becomes a volume integral \beq {\rm sgn}\left(y_{1}^{u,\prime} J_{23} \right)\frac{\kappa^3}{16\pi\sqrt{2\pi}}\int_{\omega(I^3)} e^{-\kappa^2|\omega |/8} \left\langle \frac{\kappa}{\sqrt{2\pi}}\partial_0^{-1}q_\kappa^{z(\omega)} , q_\kappa^{0}  \right\rangle\ d\omega. \nonumber \eeq Since
\begin{align*}
\frac{\kappa^3}{8(2\pi)^{3/2}}\int_{\omega(I^3)} e^{-\kappa^2|\omega |/8} \ d\omega \rightarrow&
\left\{
  \begin{array}{ll}
    1, & \hbox{$0 \in \omega(I^3)$;} \\
    0, & \hbox{$0 \notin \omega(I^3)$,}
  \end{array}
\right.
 \\
\frac{1}{\sqrt{2\pi}}\left\langle\kappa\partial_0^{-1}q_\kappa^{z(\omega)} , q_\kappa^{0}  \right\rangle \rightarrow& \pm 1,
\end{align*}
then we have
\begin{align*}
\frac{\kappa^3}{32\pi^2}\int_{\omega(I^3)} e^{-\kappa^2|\omega |/8} \left\langle \kappa\partial_0^{-1}q_\kappa^{z(\omega)} , q_\kappa^{0}  \right\rangle\ d\omega \rightarrow
\left\{
  \begin{array}{ll}
   {\rm lk}(l, S), & \hbox{$0 \in \omega(I^3)$;} \\
    0, & \hbox{$0 \notin \omega(I^3)$.}
  \end{array}
\right.
\end{align*}
\end{proof}

We can now state our main theorem.

\begin{thm}(Main Theorem)\label{t.main.2}\\
Consider two oriented hyperlinks, $\oL = \{\ol^u \}_{u=1}^{\on}$, $\uL = \{\ul^v \}_{v=1}^{\un}$ in $\bR \times \bR^3$ with non-intersecting (closed) loops, the former colored with a representation $\rho_u \equiv (\rho^+_u, \rho^-_u): \mathfrak{su}(2)\times \mathfrak{su}(2) \rightarrow {\rm End}(V_u^+) \times {\rm End}(V_u^-)$. From these 2 oriented hyperlinks, form a new colored oriented hyperlink, denoted by $\chi(\oL, \uL)$.

Let $\hat{A}_S$ be the area operator corresponding to a surface $S \subset \bR^3$, which is embedded as a surface inside $\bR \times \bR^3$, as in Definition \ref{d.s.1}. By abuse of notation, we still write $S \equiv \{0\} \times S \subset \bR \times \bR^3$ and we assume that $\oL$ is disjoint from $S$ and $\pi_0(\oL)$ intersect $S$ at most finitely many points. From Definition \ref{d.s.1}, let $\pd(\pi_0; \ol^u, S)$ denote the set of piercings between $\pi_0(\ol^u)$ and $S$ and $\left|\pd(\pi_0; \ol^u, S) \right|$ denote the number of piercings inside the set.

Then $\hat{A}_S$ acts on the Wilson Loop observable of the colored hyperlink $\chi(\oL, \uL)$, $Z(q; \chi(\oL, \uL))$ via the path integral Expression \ref{e.eha.1}, which is defined as the limit of Expression \ref{e.a.10} as $\kappa$ goes to infinity, given by
\begin{align*}
\hat{A}_S\left[Z(q; \chi(\oL, \uL))\right] :=\frac{|q|\sqrt\pi}{2}\prod_{\bar{u}=1}^{\on} \Bigg\{& \left[
\sum_{u=1}^{\on}\left|\pd(\pi_0; \ol^u, S) \right|  \sqrt{\xi_{\rho_u^+}}\right]^{1/\on}
\Tr_{\rho^+_{\bar{u}}}\ \exp[-\pi iq\ {\rm sk}(\ol^{\bar{u}}, \uL) \cdot \mathcal{E}^+]
\\
+&
\left[ i\sum_{u=1}^{\on} \left|\pd(\pi_0; \ol^u, S)\right| \sqrt{\xi_{\rho_u^-}} \right]^{1/\on}
\Tr_{\rho^-_{\bar{u}}}\ \exp[\pi iq\ {\rm sk}(\ol^{\bar{u}}, \uL) \cdot \mathcal{E}^-]\ \Bigg\}
.
\end{align*}
\end{thm}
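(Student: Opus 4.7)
The plan is to take the $\kappa\to\infty$ limit of Expression \ref{e.a.10} factor by factor in the outer product over $\bar{u}$. Each factor decomposes as a sum of two pieces, one built from the $+$ data and one from the $-$ data, and within each piece the two ingredients---a bracket raised to the $1/\on$ power and a holonomy trace $\Tr_{\rho^{\pm}_{\bar{u}}}\hat{\mathcal{W}}_\kappa^{\pm}(q;\ol^{\bar u},\uL)$---can be handled independently, since the bracket does not depend on $\bar{u}$. For the holonomy traces I quote Theorem \ref{t.w.1} verbatim: each converges to $\Tr_{\rho^{\pm}_{\bar{u}}}\exp[\mp\pi iq\,{\rm sk}(\ol^{\bar u},\uL)\cdot\mathcal{E}^{\pm}]$, which reproduces the holonomy content of the claim.

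The remaining work is to compute, for each representation class $v=1,\ldots,\bar{m}$, the $\kappa\to\infty$ limit of the inner expression
\[
\bk^{3}\int_{I_v^{2}}\Big|\Big\langle q\sum_{u\in\Gamma_v}\kappa\int_0^1 y_{1,s}^{u,\prime}\partial_0^{-1}p_\kappa^{\vec y^u_s}\,ds,\,p_\kappa^{\vec\sigma(\hat t)}\Big\rangle\sqrt{\xi_{\rho_u^{\pm}}}\Big|J_{23}(\hat t)\,d\hat t .
\]
Because $\sqrt{\xi_{\rho_u^{\pm}}}=\sqrt{\xi_{\gamma_v^{\pm}}}$ is constant on $\Gamma_v$, it pulls out of the absolute value, and from $\bk:=\kappa/(2\sqrt{4\pi})$ one has the identity $\bk^{3}=(\sqrt{\pi}/2)\cdot\kappa^{3}/(32\pi^{2})$, which exactly converts the prefactor into the one appearing in Lemma \ref{l.s.1}. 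Applied summand by summand in $u\in\Gamma_v$, that lemma gives the \emph{signed} linking number ${\rm lk}(\ol^u,S_v)$; the outer absolute value converts each piercing's $\pm 1$ contribution into $+1$, so the single-loop limit becomes $|\pd(\pi_0;\ol^u,S_v)|$. To move $|\cdot|$ across the sum over $u\in\Gamma_v$ I use that the loops in $\Gamma_v$ are pairwise non-intersecting: as $\kappa\to\infty$ the integrand for each $u$ is concentrated in a width-$O(1/\kappa)$ neighbourhood of the piercings of $\ol^u$, so distinct loops have disjointly supported peaks in the limit, giving an estimate of the form $|\sum_{u}F_\kappa^u|=\sum_u|F_\kappa^u|+o(1)$. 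Summing over $v$ and using that a loop coloured by $\gamma_v$ only pierces $S_v$ collapses the expression to $(|q|\sqrt{\pi}/2)\sum_{u=1}^{\on}|\pd(\pi_0;\ol^u,S)|\sqrt{\xi_{\rho_u^{\pm}}}$.

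Raising each bracket to the $1/\on$ power and multiplying over $\bar u=1,\ldots,\on$ factors the scalar $|q|\sqrt{\pi}/2$ out in front exactly once (since $[|q|\sqrt{\pi}/2]^{1/\on}$ appears $\on$ times and multiplies to $|q|\sqrt{\pi}/2$); combining this with the two holonomy limits from Theorem \ref{t.w.1} produces the formula stated in the theorem. The main obstacle is the localization argument that justifies interchanging $|\cdot|$ with both the $\hat t$-integration and the sum over $u\in\Gamma_v$: Lemma \ref{l.s.1} is established without absolute value and yields a signed count, so I must quantify, using the exponential Gaussian decay of the off-diagonal cross-terms, that contributions from distinct piercings (of the same loop or of different loops in $\Gamma_v$) become disjointly supported as $\kappa\to\infty$. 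Once this separation is made precise, signed cancellations between piercings are ruled out and the remainder of the proof is a direct bookkeeping of Casimir prefactors against piercing counts.
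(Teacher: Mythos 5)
Your proposal is correct and follows essentially the same route as the paper's own proof: apply Lemma \ref{l.s.1} to the surface bracket, use Equation (\ref{e.w.5}) (i.e.\ Theorem \ref{t.w.1}) for the holonomy traces, and substitute both limits into Expression \ref{e.a.10}. The only difference is one of explicitness: you spell out the constant bookkeeping $\bk^3=\frac{\sqrt{\pi}}{2}\cdot\frac{\kappa^3}{32\pi^2}$ and the Gaussian localization argument needed to move the absolute value past the $\hat t$-integral and the sum over $u\in\Gamma_v$ (so that one obtains the unsigned piercing count $\left|\pd(\pi_0;\ol^u,S)\right|$ rather than $|{\rm lk}(\ol^u,S)|$), a step the paper compresses into the single assertion that ``the integral reduces to a finite sum of terms in the limit, each term is either $\pm1$.''
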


\begin{proof}
We need to compute the limit as $\kappa$ goes to infinity, of \beq
\sum_{v=1}^{\bar{m}} \bk^{3}\int_{I_v^2}\Bigg|\left\langle q \sum_{u \in \Gamma_v} \kappa\int_0^1 y_{1,s}^{u,\prime} \partial_0^{-1}p_\kappa^{\vec{y}_s^u} ds, p_\kappa^{\vec{\sigma}(\hat{t})} \right\rangle\ \sqrt{\xi_{\rho_u^+}}
\Bigg|\ J_{23}(\hat{t}) d\hat{t} . \nonumber \eeq

Observe that the integral reduces to a finite sum of terms in the limit, each term is either $\pm 1$, by Lemma \ref{l.s.1}. Thus
\begin{align*}
\frac{\kappa^3}{32\pi^2}&\int_{\hat{t} \in I_v^2}\left| \sum_{u\in \Gamma_v^\pm} \int_{s \in I} \kappa\left\langle \partial_0^{-1}p_\kappa^{\vec{y}_s^u}, p_\kappa^{\vec{\sigma}(\hat{t})} \right\rangle\ y_{1,s}^{u,\prime} \right|\ J_{23}(\hat{t}) d\hat{t}
\longrightarrow
\sum_{u\in \Gamma_v^\pm} \left|\pd(\pi_0; \ol^u, S) \right|
\end{align*}
as $\kappa$ goes to infinity and together with Equation (\ref{e.w.5}), the result now follows from plugging the above limits inside Expression \ref{e.a.10}.
\end{proof}

\begin{rem}
Note that the area functional $A_S$ commutes with the holonomy operator. As such, we expect and require that the quantized operator $\hat{A}_S$ be proportional to the identity.
\end{rem}

Observe that the area operator is made up of a real part and an imaginary part from the main Theorem \ref{t.main.2}. The term $\sqrt{\xi_{\rho_u^+}}$ represents the translational momentum coming from boost; the term $\sqrt{\xi_{\rho_u^-}}$ represents angular momentum coming from rotation. Hence the real part corresponds to momentum coming from boost and the imaginary part corresponds to angular momentum. Because of $\sqrt{-1}$, we see that both momentum do not cancel each other out and both should be considered together as a 2-component momentum vector.

Thus the area operator counts the total momentum of the particles colliding on a surface $S$ and it takes into account both the momentum coming from boost and rotation and we are allowed to know the total momentum coming from each.

\appendix

\section{Link Diagrams}

\begin{defn}\label{d.l.1}(Link Diagrams in $\Sigma_i$)\\
Assume that a finite set of simple closed curves $L = \{l^1,l^2, \ldots l^{\hat{m}}\} \in \mathbb{R}^4$ is $C^1$ and oriented. Furthermore, assume that $L$ is a time-like hyperlink.

Parametrise each curve, $\vec{y}^u = (y^u_0, y^u_1, y^u_2, y^u_3) \equiv (y^u_0, y^u ): [0,1] \rightarrow \mathbb{R}^4$ such that $|\vec{y}^{u,\prime}| \neq 0$. Recall we define $\hat{y}_i^u$ in Section $i=1,2,3$.

\begin{enumerate}
  \item Let $\pi_0: \bR \times \mathbb{R}^3 \rightarrow \mathbb{R}^3$ be the projection on the hyperplane $\bR^3$, whose normal is given by $(1,0,0,0)$. And let $\pi_i: \bR \times \bR^3 \rightarrow \bR \times \Sigma_i$ denote a projection. By assumption, $\pi_0(L)$ defines a link in $\bR^3$. Similarly, $\pi_i(L)$ defines a link in $\bR \times \Sigma_i$.

      Fix a $i=1,2,3$. Define a standard projection $P_i$ of the curve parametrized by $y^u \equiv \pi_0( \vec{y}^u) \subset \bR^3$ or $\pi_i(\vec{y}^u)$ onto the plane $\Sigma_i$, if the following conditions are satisfied:
      \begin{description}
        \item[a]\noindent for any point $p \in \Sigma_i$, $P_i^{-1}(p)$ intersects at most 2 distinct arcs in $L$. Now $p$ is called a crossing if $P_i^{-1}(p)$ intersects exactly 2 distinct arcs.
        \item[b]\noindent at each crossing $p = \hat{y}_i^u(s_0) = \hat{y}_i^v(t_0)$, there exists an $\varepsilon > 0$ such that for all $|s-s_0| < \varepsilon$ and $|t - t_0|< \varepsilon$, and $[y^{u,\prime}(s) \times y^{v,\prime}(t)]_i\neq 0$, i.e. the 2 projected arcs on the plane $\Sigma_i$ at the crossing $p$ are linearly independent. Here, $v_i$ refers to the $i$-th component of a 3-vector $v$ in $\bR^3$.
      \end{description}
       Denote the set of crossings in $\Sigma_i$ between curves $\vec{y}^u$ and $\vec{y}^v$ by $\pd(\Sigma_i;\ \hat{y}_i^u, \hat{y}_i^v)$, $u \neq v$. 
  \item For each curve parametrized by $\hat{y}_i^u$, write the interval $[0,1]$ as a union of intervals \\ $\bigcup_{a=1}^{n(\hat{y}_i^u)} A(\hat{y}_i^u)^a$, where in each interval $A(\hat{y}_i^u)^a$, $s \in A(\hat{y}_i^u)^a \mapsto \hat{y}_i^u(s)$ is a bijection, for each $i=1,2,3$. Write $C(\hat{y}_i^u)^a := \hat{y}_i^u(A(\hat{y}_i^u)^a) \in \Sigma_i$ be the image of the interval $A(\hat{y}_i^u)^a$ under $\hat{y}_i^u$. Without loss of generality, further assume each $C(\hat{y}_i^u)^a$ contains at most one crossing which is an interior point in $C(\hat{y}_i^u)^a$.
  \item Given 2 arcs $C(\hat{y}_i^u)^a, C(\hat{y}_i^v)^{\hat{a}}$ which intersect at $p \in \Sigma_i$, define $\sgn(J(C(\hat{y}_i^u)^a, C(\hat{y}_i^v)^{\hat{a}}))$ to be the sign of the determinant of the Jacobian $J(C(\hat{y}_i^u)^a, C(\hat{y}_i^v)^{\hat{a}})= [y^{u,\prime}(s) \times y^{v,\prime}(t)]_i$ at the crossing $p = \hat{y}_i^u(s_0) =\hat{y}_i^v(t_0)$. Otherwise, define it to be zero if the 2 arcs do not intersect at all. We will also write $\sgn(p;\hat{y}_i^u:\hat{y}_i^v) \equiv \sgn(J(C(\hat{y}_i^u)^a, C(\hat{y}_i^v)^{\hat{a}}))$, $p \in C(\hat{y}_i^u)^a\cap C(\hat{y}_i^v)^{\hat{a}}$ and call this the orientation of $p$.
  \item Using the same notation as the previous item, for each crossing $p \in C(\hat{y}_i^u)^a \cap C(\hat{y}_i^v)^{\hat{a}}$, define
  \beq \sgn(C(\hat{y}_i^u)^a:C(\hat{y}_i^v)^{\hat{a}}) = \left\{
                                  \begin{array}{ll}
                                    1, & \hbox{$y_i^u > y_i^v$;} \\
                                    -1, & \hbox{$y_i^u < y_i^v$.}
                                  \end{array}
                                \right. \nonumber \eeq
If the 2 arcs do not intersect, set it to be 0. We will also write $\sgn(p;y_i^u:y_i^v) \equiv \sgn(C(\hat{y}_i^u)^a:C(\hat{y}_i^v)^{\hat{a}})$ and call this the height of $p$.

\item For each crossing $p \in \pd(\Sigma_i; \hat{y}_i^u, \hat{y}_i^v)$, the algebraic crossing number is defined by \begin{align} \varepsilon(p) =& \sgn(p;\hat{y}_i^u:\hat{y}_i^v)\cdot\sgn(p;y_i^u:y_i^v) \in \{\pm 1\}. \nonumber \end{align}
This is actually well defined on an oriented link diagram in $\Sigma_i$, independent of the parametrization used.

\item Using the same notation as the previous item, for each crossing $p \in C(\hat{y}_i^u)^a \cap C(\hat{y}_i^v)^{\hat{a}}$, define
  \beq {\rm gap}(C(\hat{y}_i^u)^a:C(\hat{y}_i^v)^{\hat{a}}) = \left\{
                                  \begin{array}{ll}
                                    1, & \hbox{$y_0^u < y_0^v$;} \\
                                    -1, & \hbox{$y_0^u > y_0^v$.}
                                  \end{array}
                                \right. \nonumber \eeq
We will also write $\sgn(p;y_0^u:y_0^v)={\rm gap}(C(\hat{y}_i^u)^a:C(\hat{y}_i^v)^{\hat{a}})$ and refer it as the time-lag of $p$.
\end{enumerate}
\end{defn}

\section{Important Lemmas}

\begin{lem}\label{l.l.1}
Refer to Definition \ref{d.l.1}. Let $l^u$ and $l^v$ be 2 distinct closed curves and $\vec{y}$ and $\vec{\varrho}$ are parametrizations of each one respectively. Then,
\begin{align}
\lim_{\kappa \rightarrow \infty}&\frac{1}{4\pi}\frac{\kappa^3}{4}\int_{A(y)^a} ds \int_{A(\varrho)^{\hat{a}}}\  dt\ \epsilon^{ijk}\left\langle  p_\kappa^{\vec{y}_s}, p_\kappa^{\vec{\varrho}_t}\right\rangle_k y_{i,s}^\prime\varrho_{j,t}^\prime ds dt \nonumber \\
=&\pi \cdot \sum_{k=1}^3\sgn(J(C(\hat{y}_k)^a, C(\hat{\varrho}_k)^{\hat{a}}))\cdot \sgn(C(\hat{y}_k)^a:C(\hat{\varrho}_k)^{\hat{a}}) \cdot \gap(C(\hat{y}_k)^a:C(\hat{\varrho}_k)^{\hat{a}}).\nonumber
\end{align}
\end{lem}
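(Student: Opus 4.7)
The plan is to decompose the pairing $\langle p_\kappa^{\vec{y}_s},p_\kappa^{\vec{\varrho}_t}\rangle_k$ into its three defining factors --- a planar two-dimensional Gaussian pairing in $\Sigma_k$, a one-dimensional height pairing along the $x_k$-axis, and a one-dimensional time-lag pairing along the $x_0$-axis --- and to determine the asymptotic contribution of each factor separately as $\kappa\to\infty$. The time-like hypothesis ensures that at any crossing in $\Sigma_k$ the three coordinate directions decouple, so the three asymptotics can be multiplied together to recover the limit.

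For each fixed $k$ I would rewrite $\epsilon^{ijk}y'_{i,s}\varrho'_{j,t}=y'_{i,s}\varrho'_{j,t}-y'_{j,s}\varrho'_{i,t}=:f_k(s,t)$, where $(i,j,k)$ is the cyclic permutation of $(1,2,3)$; this is precisely the negative of the Jacobian determinant of $\Phi_k(s,t):=(y_i(s)-\varrho_i(t),y_j(s)-\varrho_j(t))$, whose zero set in $A(y)^a\times A(\varrho)^{\hat{a}}$ is the at most one crossing $(s_0,t_0)$ guaranteed by the standard-projection assumption in Definition \ref{d.l.1}. The planar factor evaluates to $e^{-\kappa^2|\Phi_k|^2/8}$, and a Laplace-method change of variables through $\Phi_k$ near the crossing (with exponentially small complementary contribution) yields
\[ \int_{A(y)^a}\int_{A(\varrho)^{\hat{a}}} f_k\,e^{-\kappa^2|\Phi_k|^2/8}\,ds\,dt\;\longrightarrow\;\frac{8\pi}{\kappa^2}\,\sgn\!\bigl(J(C(\hat{y}_k)^a,C(\hat{\varrho}_k)^{\hat{a}})\bigr), \]
the right-hand side being zero when the arcs do not cross in $\Sigma_k$. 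For the two one-dimensional pairings I would use the identity $\partial^{-1}q_\kappa^\mu(t)=\tfrac12\int\sgn(t-\tau)q_\kappa^\mu(\tau)\,d\tau$ together with $\int q_\kappa^\mu\,d\tau=O(\kappa^{-1/2})$ to extract the pointwise limits
\[ \langle q_\kappa^{y_{k,s}},\kappa\partial_k^{-1}q_\kappa^{\varrho_{k,t}}\rangle\to\sqrt{2\pi}\,\sgn(y_{k,s}-\varrho_{k,t}),\qquad \langle\partial_0^{-1}q_\kappa^{y_{0,s}},q_\kappa^{\varrho_{0,t}}\rangle\to\tfrac{\sqrt{2\pi}}{\kappa}\,\sgn(\varrho_{0,t}-y_{0,s}), \]
valid whenever the arguments of $\sgn$ are nonzero, noting the different scaling caused by the extra $\kappa$ in the first pairing. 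The time-like hypothesis combined with $\hat{y}_k(s_0)=\hat{\varrho}_k(t_0)$ at the crossing forces $y_k(s_0)\neq\varrho_k(t_0)$ and $y_0(s_0)\neq\varrho_0(t_0)$, so these pointwise limits identify the two factors with the height sign $\sgn(C(\hat{y}_k)^a\!:\!C(\hat{\varrho}_k)^{\hat{a}})$ and the time-lag sign $\gap(C(\hat{y}_k)^a\!:\!C(\hat{\varrho}_k)^{\hat{a}})$ on a neighbourhood of $(s_0,t_0)$. Multiplying the three asymptotics by the prefactor gives $\tfrac{\kappa^3}{16\pi}\cdot\tfrac{8\pi}{\kappa^2}\cdot\sqrt{2\pi}\cdot\tfrac{\sqrt{2\pi}}{\kappa}=\pi$ per crossing in $\Sigma_k$, weighted by the product of the three signs; summing over $k=1,2,3$ delivers the asserted identity.

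The main obstacle I anticipate is the \emph{joint} localisation of the three factors. The two one-dimensional pairings converge only pointwise to step functions with jump discontinuities along $\{y_k=\varrho_k\}$ and $\{y_0=\varrho_0\}$ respectively, while the planar Gaussian, though concentrated at $\Phi_k=0$, has tails that formally overlap those jump loci. One must therefore first restrict the integration to a neighbourhood of $(s_0,t_0)$ lying well inside the region of constancy of both signs --- possible because the time-like hypothesis bounds the crossing away from both jump loci --- and then estimate the complementary contribution by the exponential decay of $e^{-\kappa^2|\Phi_k|^2/8}$ off its zero set, using that $|\Phi_k|$ is bounded below outside any neighbourhood of the crossing. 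Only after this splitting is in place can the product of pointwise limits be safely substituted inside the integral to yield the stated formula.
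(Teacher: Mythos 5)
Your proposal is correct and follows essentially the same route as the paper's proof: factor $\langle\cdot,\cdot\rangle_k$ into the planar Gaussian, the height pairing and the time-lag pairing, absorb $\epsilon^{ijk}y'_{i,s}\varrho'_{j,t}$ into the Jacobian of a change of variables that localises the planar factor at the unique crossing (contributing $\sgn(J(C(\hat{y}_k)^a,C(\hat{\varrho}_k)^{\hat{a}}))$), and identify the two one-dimensional pairings with the height and time-lag signs, the constants combining to $\pi$ exactly as you compute. The only differences are presentational --- you derive the one-dimensional limits directly instead of citing the appendix lemma of \cite{EH-Lim02}, and you make explicit the uniform-localisation step that the paper leaves implicit.
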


\begin{proof}
From Item 2 in the lemma in the appendix found in \cite{EH-Lim02},
\begin{align}
\frac{\kappa^2}{4}\langle p_\kappa^{\hat{y}_{l,s}}, p_\kappa^{\hat{\varrho}_{l,t}} \rangle
&= \frac{2\pi\kappa^2}{8\pi}e^{-\kappa^2|\hat{y}_{l,s} - \hat{\varrho}_{l,t}|^2/8}. \nonumber
\end{align}

Then the integral
\beq \frac{1}{4\pi}\frac{\kappa^3}{4}\int_{A(y)^a} ds \int_{A(\varrho)^{\hat{a}}}\  dt\ \epsilon^{ijk}\left\langle  p_\kappa^{\vec{y}_{s}}, p_\kappa^{\vec{\varrho}_{t}}\right\rangle_k y_{i,s}^\prime\hat{y}_{j,t}^{\prime} ds dt \nonumber \eeq
becomes
\begin{align}
\pi\sum_{k=1}^3\int_{A(y)^a}\ ds\int_{A(\varrho)^{\hat{a}}}\  dt\ \left[y_s^\prime \times \hat{y}_t^\prime \right]_k \frac{\kappa^2}{8\pi}e^{-\frac{\kappa^2}{8}|\hat{y}_{k,s} - \hat{\varrho}_{k,t}|^2} \frac{\kappa}{\sqrt{2\pi}}\left\langle q_\kappa^{y_{k,s}}, \partial_0^{-1}q_\kappa^{\varrho_{k,t}} \right\rangle \frac{\kappa}{\sqrt{2\pi}}\left\langle \partial_0^{-1}q_\kappa^{y_{0,s}}, q_\kappa^{\varrho_{0,t}} \right\rangle. \nonumber
\end{align}

Make a change of variables: $\hat{y}_k: s \in A(y)^a  \mapsto x_+^k = \hat{y}_k(s) \in \bR^2$ and $z^k: x_+^k \mapsto  y_k(s)$, $\tau^k: x_+^k \mapsto y_{0}(s)$. Similarly, $\hat{\varrho}_k: t \in A(\varrho)^{\hat{a}}  \mapsto x_-^k =  \hat{\varrho}_k(t)\in \bR^2$ and $\tilde{z}^k: x_-^k \mapsto \varrho_k(t)$, $\tilde{\tau}^k: x_- \mapsto \varrho_0(t)$. Therefore, \beq \Xi^k: (s, t) \longmapsto y_k(s) + \varrho_k(t). \nonumber \eeq

Note that $\left[y_s \times \varrho_t\right]_k = J(\Xi^k(s,t))$. Therefore the integral becomes ($d\omega^k = dx_+^k \wedge dx_-^k$.)
\begin{align}
\pi \cdot \sum_{k=1}^3 \sgn(J(C(\hat{y}_k)^a, C(\hat{\varrho}_k)^{\hat{a}}))&\int_{ C(\hat{y}_k)^a \times C(\hat{\varrho}_k)^{\hat{a}}} \ \frac{\kappa^2}{8\pi}e^{-\kappa^2|x_+^k - x_-^k|^2/8} \nonumber \\
& \times \frac{\kappa}{\sqrt{2\pi}}\left\langle q_\kappa^{ z^k(x_+^k)}, \partial_0^{-1}q_\kappa^{ \tilde{z}^k(x_-^k)} \right\rangle\cdot \frac{\kappa}{\sqrt{2\pi}}\left\langle \partial_0^{-1}q_\kappa^{\tau^k(x_+^k)}, q_\kappa^{\tilde{\tau}^k(x_-^k)} \right\rangle\ d\omega^k. \nonumber
\end{align}
When $C(\hat{y}_k)^a \cap C(\hat{\varrho}_k)^{\hat{a}} = \emptyset$, the double integral over $C(\hat{y}_k)^a \times C(\hat{\varrho}_k)^{\hat{a}}$ goes to 0. The only case is when $p \in C(\hat{y}_k)^a \cap C(\hat{\varrho}_k)^{\hat{a}} $.

Now, we have \beq \int_{C(\hat{y}_k)^a \times C(\hat{\varrho}_k)^{\hat{a}}} \frac{\kappa^2}{8\pi}e^{-\kappa^2|x_+^k - x_-^k|^2/8} d\omega \rightarrow 1 \nonumber \eeq as $\kappa \rightarrow \infty$.

Apply Item 1 in the lemma in the appendix found in \cite{EH-Lim02},
\begin{align}
&\pi \cdot \sgn(J(C(\hat{y}_k)^a, C(\hat{\varrho}_k)^{\hat{a}}))\int_{ C(\hat{y}_k)^a \times C(\hat{\varrho}_k)^{\hat{a}}} \ \frac{\kappa^2}{8\pi}e^{-\kappa^2|x_+^k - x_-^k|^2/8} \nonumber \\
&\hspace{4.5cm}\times \frac{\kappa}{\sqrt{2\pi}}\left\langle q_\kappa^{ z^k(x_+^k)}, \partial_0^{-1}q_\kappa^{ \tilde{z}^k(x_-^k)} \right\rangle\cdot\frac{\kappa}{\sqrt{2\pi}}\left\langle \partial_0^{-1}q_\kappa^{\tau^k(x_+^k)}, q_\kappa^{\tilde{\tau}^k(x_-^k)} \right\rangle\ d\omega \nonumber \\
& \longrightarrow_{\kappa \rightarrow \infty} \pi\cdot\sgn(J(C(\hat{y}_k)^a, C(\hat{\varrho}_k)^{\hat{a}}))\sgn(C(\hat{y}_k)^a:C(\hat{\varrho}_k)^{\hat{a}})\cdot \gap(C(\hat{y}_k)^a:C(\hat{\varrho}_k)^{\hat{a}}). \nonumber
\end{align}
\end{proof}

\section{Surfaces}


\begin{notation}(Parametrization of a surface)\label{n.s.3}\\
Choose an orientable, closed and bounded surface $S \subset \bR^4$, with or without boundary. If it has a boundary $\partial S$, then $\partial S$ is assumed to be a time-like hyperlink. Do note that we allow $S$ to be disconnected, with finite number of components. Parametrize it using \beq \vec{\sigma}: (t, \bar{t}) \in I^2 \mapsto \left(\sigma_0(t,\bar{t}), \sigma_1(t,\bar{t}), \sigma_2(t,\bar{t}), \sigma_3(t,\bar{t}) \right)\in  \bR^4 \nonumber \eeq and let \beq J_{\alpha\beta} = \frac{\partial \sigma_\alpha}{\partial t} \frac{\partial \sigma_\beta}{\partial \bar{t}} - \frac{\partial \sigma_\alpha}{\partial \bar{t}} \frac{\partial \sigma_\beta}{\partial t}. \nonumber \eeq

When we project $S$ inside $\bR^3$ as $\pi_0(S)$, we can parametrize it using $\sigma$. Let
\begin{align*}
K_\sigma(\hat{t}) \equiv& K_\sigma(t,\bar{t}) := (J_{01}(t,\bar{t}), J_{02}(t,\bar{t}), J_{03}(t,\bar{t})),\\
J_\sigma(\hat{t}) \equiv& J_\sigma(t,\bar{t}) := \frac{\partial}{\partial t}\sigma(t, \bar{t}) \times  \frac{\partial}{\partial\bar{t}}\sigma(t, \bar{t}) \equiv (J_{23}(\hat{t}), J_{31}(\hat{t}), J_{12}(\hat{t})).
\end{align*}
If $S$ is assigned an orientation, we will choose a parametrization such that the inherited orientation on $\pi_0(S) \subset \bR^3$ is consistent with the vector $J_\sigma$.
\end{notation}

\begin{defn}\label{d.s.1}
Given an orientable surface $S \subset \bR^4$ as described above and a time-like loop $l \subset \bR^4$, both disjoint, project them onto $\bR^3$, denoted by $\pi_0(S)$ and $\pi_0(l)$ respectively, such that $\pi_0(l)$ intersect $\pi_0(S)$ at finitely many points. We also assume that $\partial S$ and $l$ together form a time-like hyperlink. Note that when we project $S$ inside $\bR^3$, $\pi_0(S)$ may not be a surface. But we only consider ambient isotopic equivalent classes of surface. Thus, for each point $q \in S$, we can choose a smaller surface $q \in S' \subset S$, such that $\pi_0(S')$ is indeed a surface inside $\bR^3$. So we shall assume that $\pi_0(S)$ is a surface.

Let $\pd(\pi_0; l, S)$ denote the set of finitely many intersection points between $\pi_0(l)$ and $\pi_0(S)$, henceforth termed as piercings. Let $\vec{\varrho} \equiv (\varrho_0, \varrho): I \rightarrow \bR^4$ denote a parametrization for $l$ and $\vec{\sigma}: I^2 \rightarrow \bR^4$ denote a parametrization for $S$. With such a parametrization, we orientate the curve $l$ and the surface $S$.

For each $p \in \pd(\pi_0;l,S)$, let $\sgn(p; \varrho', J_\sigma)$ denote the sign of $\varrho' \cdot J_\sigma$, whereby $J_\sigma$ was defined in Notation \ref{n.s.3}. We will also refer $\sgn(p; \varrho', J_\sigma)$ as the orientation of $p$.

Define \beq \sgn(p; \varrho_0, \sigma_0) =
\left\{
  \begin{array}{ll}
    1, & \hbox{$\varrho_0< \sigma_0$;} \\
    -1, & \hbox{$\varrho_0> \sigma_0$.}
  \end{array}
\right.
 \nonumber \eeq and refer to it as the height of $p$.

Define the algebraic piercing number of $p$ as \beq \varepsilon(p) :=  \sgn(p; \varrho', J_\sigma)\cdot \sgn(p; \varrho_0, \sigma_0). \nonumber \eeq And we define the linking number between $l$ and $S$ as \beq {\rm lk}(l, S) := \sum_{p \in \pd(\pi_0; l,S)}\varepsilon(p). \nonumber \eeq If $L = \{l^u\}_{u=1}^n$ is a time-like hyperlink, we define the linking number between $L$ and $S$ as \beq {\rm lk}(L, S) := \sum_{u=1}^n {\rm lk}(l^u, S). \nonumber \eeq


In fact, we can project $S$ inside $\bR \times \Sigma_i$ using the projection $\pi_i$. The linking number between $l$ and $S$ is independent on how we project $l$ and $S$ inside a hypersurface in $\bR^4$. Thus, we could have chosen to project them inside $\bR \times \Sigma_i$ and obtain the linking number between $l$ and $S$.

Project $l$ and $S$ inside $\bR \times \Sigma_i$ using $\pi_i$, $i=1, 2, 3$. Let $\pd(\pi_i; l, S)$ denote the set of finitely many piercings between $\pi_i(l)$ and $\pi_i(S)$. For each $p \in  \pd(\pi_i; l, S)$, we can define an algebraic piercing number $\varepsilon(p)$ in a similar fashion as above. The sum of these algebraic intersection numbers will also yield the linking number between $l$ and $S$, which is a topological invariant. 

Finally, when we are given a surface $S_0 \subset \bR^3$, we will always assume that $S_0 \subset \{0\} \times \bR^3$.
\end{defn}



\end{document}